\newtheorem{theorem}{Theorem}
\newtheorem{definition}[theorem]{Definition}
\newtheorem{prop}[theorem]{Proposition}
\newtheorem{lemma}[theorem]{Lemma}
\newtheorem{claim}[theorem]{Claim}
\newcommand{\ignore}[1]{} 
\DeclareMathOperator{\OPT}{OPT}
\DeclareMathOperator{\chase}{Chase}
\newcommand{\R}{\mathbb{R}}
\newcommand{\ov}{\mathbf{0}}               
\algnewcommand{\IIf}[1]{\State\algorithmicif\ #1\ \algorithmicthen}
\algnewcommand{\EndIIf}{\unskip\ \algorithmicend\ \algorithmicif}
\newcommand{\twonorm}[1]{\left\lVert #1 \right\rVert_2}
\newcommand{\onenorm}[1]{\left\lVert #1 \right\rVert_1}
\newcounter{note}
\begin{document}

\title{Nested Convex Bodies are Chaseable\thanks{This work was supported by NWO grant 639.022.211, ERC consolidator grant 617951 and GA \v{C}R grant 17-09142S.}}

\author{
Nikhil Bansal\thanks{TU Eindhoven, Netherlands.
\texttt{\{n.bansal,m.elias,g.koumoutsos\}@tue.nl, seeun.umboh@gmail.com}}
\and
Martin B\"ohm\thanks{Charles University, Czech Republic.
	\texttt{bohm@iuuk.mff.cuni.cz}}
\and
Marek Eli\'a\v{s}\footnotemark[2]
\and
Grigorios Koumoutsos\footnotemark[2]
\and
Seeun William Umboh\footnotemark[2]}
\begin{titlepage}
\def\thepage{}
\thispagestyle{empty}

\maketitle
\begin{abstract}
  In the Convex Body Chasing problem, we are given an initial point $v_0 \in \R^d$ and an online sequence of $n$ convex bodies $F_1, \ldots, F_n$. When we receive $F_i$, we are required to move inside $F_i$. Our goal is to minimize the total distance travelled. 
This fundamental online problem was first studied by Friedman and Linial (DCG 1993). They proved an $\Omega(\sqrt{d})$ lower bound on the competitive ratio, and conjectured that a competitive ratio depending only on $d$ is possible. However, despite much interest in the problem, the conjecture remains wide open.

We consider the setting in which the convex bodies are nested: $F_1 \supset \ldots \supset F_n$. The nested setting is closely related to extending the online LP framework of Buchbinder and Naor (ESA 2005) to arbitrary linear constraints. Moreover, this setting retains much of the difficulty of the general setting and captures an essential obstacle in resolving Friedman and Linial's conjecture. In this work, we give the first $f(d)$-competitive algorithm for chasing nested convex bodies in $\R^d$. 
\end{abstract}
\end{titlepage}

\newpage
\section{Introduction}\label{sec:intro}
In the \emph{convex body chasing} problem, introduced by Friedman and
Linial~\cite{Friedman93}, we are given an initial position $v_0 \in \R^d$. 
At each time step $i$, we receive a convex set $F_i \subset \R^d$ as a request, and to serve the request, we must move to some point $v_i \in F_i$. The goal is to minimize the
total distance traveled to serve the requests.
The distance can be measured using an arbitrary norm, but unless stated
otherwise, it is measured using the Euclidean norm. 
As any convex body can be approximated arbitrarily well by intersection of halfspaces, one can assume that $F_i$ are halfspaces\footnote{If $F$ is the intersection of halfspaces $H_1, \ldots, H_s$, to simulate the request for $F$, the adversary can give $H_1,\ldots,H_s$ several times in a round-robin manner until the online algorithm moves inside $F$. Not revealing $F$ directly can only hurt the online algorithm and does not affect the offline solution.} and hence this problem is also known as \emph{halfspace chasing}.

This problem belongs to a very rich class of problems called
{\em Metrical Service Systems} (MSS) \cite{BLS92}. In an MSS, we are given an arbitrary metric space $(V, \rho)$ and an initial position $v_0\in V$. At each time $i$,
a request set $F_i \subset V$ arrives and we must serve it by
moving to some $v_i \in F_i$.
MSS captures several interesting online problems such as the $k$-server problem. While almost tight bounds are known for general MSS~\cite{BLS92,FM00,BBM01}, these bounds are not so interesting as typical online problems correspond to
MSS with highly structured requests $F_i$ and metric space $(V,\rho)$.
There has been a lot of interesting work on particular cases of MSS, e.g.~\cite{KP95,CL96,B96,BK04,SS06,Sit14}, 
but understanding the role of structure in MSS instances is a major long-term goal in online computation with far-reaching consequences.

Indeed, the main motivation of~\cite{Friedman93} for considering the convex body chasing problem was to express the competitive
ratio of MSS in terms of geometric properties of the request sets $F_i$. For the convex body chasing problem, they obtained an $O(1)$-competitive algorithm for $d=2$; for $d >2$, they gave an $\Omega(\sqrt{d})$ lower bound and conjectured that a  competitive ratio depending only on $d$ is possible. However, despite much interest, the conjecture remains open.

\paragraph{Nested Convex Body Chasing.}
In this paper, we consider the {\em nested convex body chasing} problem 
where the requested convex bodies are nested, i.e.,~$F_i \subset F_{i-1}$ for each $i\geq 1$. 
This natural special case is closely related to many fundamental questions in online algorithms and online learning,
and has been of interest in recent years.
However, prior to our work, nothing was known for it beyond the results of Friedman and Linial~\cite{Friedman93}
for the general case.

\subsection{Connections and Related Work}
A useful equivalent formulation of the nested problem is the following:
Given an initial position $v_0$, at each time step $i$, we receive some arbitrary convex body $F_i$ (not necessarily nested), and we must move to some point $v_i$ that is contained in \emph{every} convex body seen so far, i.e.~$v_i \in F_1 \cap \ldots \cap F_i$. The goal is to minimize the total distance traveled.
Indeed, this is equivalent to convex body chasing with requests
$F'_i= F_1 \cap \ldots \cap F_i$, which form a nested sequence.

\paragraph{Online Covering LP.} The influential primal-dual framework of
Buchbinder and Naor for online covering LPs \cite{BN09} can now be seen as a
special case of nested convex body chasing with the $\ell_1$-norm. In the former problem, the algorithm
starts at the origin $v_0=\bf{0}$, and at each time $i$,
a linear constraint $a_i^\top x \geq b_i$ with non-negative $a_i$ and $b_i$ arrives. 
The goal is to maintain a feasible point $x_i$
that satisfies all previous requests while the coordinates of $x_i$ are only
allowed to increase over time.
The objective function $c^\top x$ (where $c$ is non-negative) can be assumed to be $\onenorm{x}$ by scaling.
Finally, note that in nested convex body chasing with covering constraints, it never helps to decrease any variable and hence online covering LP is indeed a special case of nested convex body chasing.

While the online primal-dual framework \cite{BN09} has been applied successfully to many online problems, so far it is limited to LPs with covering and packing constraints, and minor tweaks thereof. An important goal is to extend the online LP framework more broadly beyond packing and covering LPs. For example, it is unclear how to do this even for seemingly simple formulations such as $k$-server on depth-$2$ HSTs or Metrical Task Systems on a line. 
Since the nested convex body chasing problem corresponds to solving online LPs with arbitrary constraints (with both positive and negative entries) and a specific type of objective, understanding the nested convex body chasing problem is an essential step towards this goal. Indeed, this is one of our main motivations to consider this problem. 

\paragraph{General Convex Body Chasing.} Another motivation for studying the
nested case is that it captures much of the inherent hardness of the general
convex body chasing problem. For example, the $\Omega(\sqrt{d})$ lower
bound~\cite{Friedman93} for the general problem also holds in the nested
setting. Moreover, several natural algorithms also fail for the nested case.

\paragraph{Other Special Cases.} The only known algorithms for chasing convex
bodies in $\R^d$ with $d > 2$ are for certain restricted families of convex
bodies $F_i$ such as lines and affine subspaces. For chasing lines, Friedman and
Linial~\cite{Friedman93} gave an $O(1)$-competitive algorithm. For chasing lines and half-line, Sitters~\cite{Sit14} showed that the generalized work function algorithm (WFA) is also $O(1)$-competitive; this is interesting as the WFA is a generic algorithm that attains nearly-optimal competitive ratios for many MSS and is a natural candidate to be $f(d)$-competitive for convex body chasing. 
Recently, Antoniadis et al.~\cite{Antoniadis16} gave an elegant and simple $O(1)$-competitive algorithm 
for chasing lines, and a $2^{O(d)}$-competitive algorithm for chasing affine subspaces.
However, all these results crucially rely on the fact that
the requests $F_i$ have a lower dimension and 
do not seem to apply to
our problem.

\paragraph{Connections to Online Learning.} The convex body chasing problem is also closely related to recent work combining aspects of competitive analysis and online learning. One such work is the \emph{Smoothed Online Convex Optimization} setting of Andrew et al.~\cite{ABLLMRW13, ABLLMRW15} which incorporates movement cost into the well-studied online learning setting of online convex optimization. The problem is well-understood for $d=1$ \cite{BGKPSS15,AS17}, but nothing is known for larger $d$. Another related work is that of Buchbinder, Chen and Naor~\cite{BCN14} which combines online covering LPs with movement cost.

\subsection{Our Results}
Our main result is the following.
\begin{theorem}
  \label{thm:alg}
  There is an algorithm for chasing nested convex bodies in $\R^d$ with
competitive ratio that only depends on $d$. In particular, it has competitive
ratio $O(6^d(d!)^2)$. 
\end{theorem}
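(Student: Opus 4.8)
The plan is to prove Theorem~\ref{thm:alg} in two stages: a scale reduction that removes any dependence on the absolute geometry, and then a recursion on the dimension for the resulting bounded instance.

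\emph{Stage~1: reduction to a bounded instance.} I would first reduce to \rchasing, where every requested body is promised to meet the box $\Binf(v_0,r)$ for a known scale $r$, by the usual doubling trick: run the bounded algorithm with $r=2^0,2^1,2^2,\dots$, and whenever the next body misses $\Binf(v_0,r)$ or the bounded algorithm is about to exceed its guaranteed cost $g(d)\cdot r$, jump to a feasible point and restart with scale $2r$. A convex body disjoint from $\Binf(v_0,r)$ forces $\OPT\ge r$, while the bounded guarantee $g(d)\cdot r$ means no phase is ever aborted for cost reasons before some body escapes; hence the smallest scale at which no body escapes is $O(\OPT)$, and summing the geometric series of phase costs bounds the total by $O(g(d))\cdot\OPT$. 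So it suffices to build a bounded algorithm of cost at most $g(d)\cdot r$ with $g(d)=O(6^d(d!)^2)$.

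\emph{Stage~2: the recursion.} I would build the bounded algorithm $\alg_d$ by induction on $d$. The base case $d=1$ is greedy clipping to the nearest endpoint of the current interval; for nested intervals this costs $O(r)$, since each rightward move is charged to the increase of the left endpoint and each leftward move to the decrease of the right endpoint. For $d\ge 2$, $\alg_d$ maintains the current feasible region $\mathcal F=F_1\cap\dots\cap F_i$ and an \emph{active hyperplane} $A$ — the bounding hyperplane of the most recently violated halfspace — and it runs $\alg_{d-1}$ (a $\chase$ subroutine) on $A$ with the restricted requests $\{F_i\cap A\}$, a bounded nested instance in $\R^{d-1}$ of radius $\le r$. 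A new hyperplane is chosen on one of two events: (i) $\mathcal F$ \emph{escapes} $A$, i.e.\ some new $F_i$ has $F_i\cap A=\emptyset$, whereupon an $\Explore$ step moves off $A$ onto a fresh tight hyperplane; or (ii) the inradius of $\mathcal F$ first falls below a threshold $\rho^\ast=\Theta(r/d)$, at which point a geometric lemma (a convex body of inradius $\rho$ lies in a slab of width $O(d\rho)$) shows that $\mathcal F$, hence every future region, lies in a slab of width $O(r)$; we move to the slab's central hyperplane and split the rest into a bounded one-dimensional chase orthogonal to the slab plus an $\alg_{d-1}$ chase of the projections inside it. (The chain of active hyperplanes produced by the nested recursion carves out an orthant-like cone, which is the reason for the name $\alg$.)

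\emph{Accounting and the main obstacle.} Each new choice of active hyperplane costs $O(r)$ of movement plus one run of $\alg_{d-1}$, i.e.\ at most $g(d-1)\cdot r$. The crux — and the step I expect to require real work — is bounding the number of such re-selections by $\mathrm{poly}(d)$. The intuition is that an escape of type~(i) can occur only after the fat region $\mathcal F$ has moved orthogonally to the current $A$ by at least its own width $\ge 2\rho^\ast$, and $\mathcal F$ stays inside a box of radius $r$, so there can be only $O(r/\rho^\ast)=O(d)$ escapes per direction; turning this into an honest potential that drops by a fixed amount at \emph{every} re-selection, even as the normal of $A$ changes each time, is the delicate point. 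This is exactly where naive center-following breaks — if the region is allowed to become thin, its incenter can drift without bound — which is why the threshold $\rho^\ast$ and the dimension drop in~(ii) are essential. Granting a bound of $O(d^2)$ re-selections, the recursion reads $g(d)\le O(d^2)\big(g(d-1)+O(d)\big)$, which unrolls to $g(d)=O(6^d(d!)^2)$; together with Stage~1 this gives the theorem. The remaining ingredients — the inradius-to-slab lemma and tracking how the radius parameter propagates through the recursion — are routine by comparison.
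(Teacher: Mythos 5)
Your Stage~1 matches the paper's reduction (Lemma~\ref{lem:reduction}): scale-doubling with geometric phases, each restart charged to $\OPT\ge 2^{j-1}$, which is sound.

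Stage~2 is a genuinely different construction from the paper's, and it has a real gap that you yourself flag: \emph{bounding the number of re-selections of the active hyperplane}. Your potential argument — ``an escape can occur only after $\mathcal F$ has moved orthogonally to $A$ by at least its own width $\ge 2\rho^\ast$'' — is not correct. Since $\mathcal F$ is only required to have inradius $\ge\rho^\ast$, it can be \emph{tangent} to the current active hyperplane $A$, and a single new halfspace can push it off $A$ by an arbitrarily small amount $\varepsilon$ while keeping its inradius above the threshold. (Picture a disk of radius $\rho^\ast$ tangent to $A$, then a halfspace cutting off a sliver of width $\varepsilon$.) Each such escape triggers a fresh $\chase_{d-1}$ run and contributes $\Theta(g(d-1)\cdot r)$ to the cost, so without a quantitative shrinkage per escape the cost is unbounded. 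Nothing in the definition of the active hyperplane (boundary of the most recently violated halfspace) limits how often this happens, and the normals can change arbitrarily, which is precisely what defeats the potential you describe. Steinhagen-type inradius/slab facts control the thin regime but do nothing for the fat regime where the escapes pile up.

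The paper's fix is exactly at this point and is quite rigid: inside a phase it does \emph{not} use the boundary hyperplanes of arriving constraints at all; it fixes a center $s$ and uses only the $d$ axis-aligned hyperplanes $H_1,\dots,H_d$ through $s$. There can be at most $d$ sub-chases per phase for the trivial reason that there are only $d$ candidate hyperplanes. When a request misses all of them, Proposition~\ref{prop:reduction} kicks in: the body lies in an orthant of $B(s,r)$, and hence its minimum enclosing ball has radius $\le r(1-1/d)^{1/2}$. Re-center there and start a new phase. This geometric fact plays the role your inradius threshold was supposed to play, but it yields an \emph{unconditional} multiplicative shrinkage per phase, so the per-phase costs form a geometric series and the recurrence $g(d)=6d^2\,g(d-1)$ closes. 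Your overall architecture (recurse on dimension via a nested-chase subroutine on a hyperplane, with occasional re-centering to shrink the scale) is the right shape, but you need to replace the ``boundary-of-violated-halfspace'' rule and the inradius threshold with the axis-aligned-through-a-center rule and the orthant/ball-shrinkage lemma, or else supply a correct potential argument that you do not currently have.
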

The algorithm is described in Section \ref{sec:alg} and is based on two ideas.
First we show that to design an $O_d(1)$-competitive algorithm for chasing
nested convex bodies, it suffices to make an algorithm for $r$-bounded
instances,
where all the bodies $F_i$ are completely contained in some ball $B(v,r)$ with
radius $r$ and center $v$. 
Moreover, even though competitive ratio is a relative guarantee, it suffices to
bound the total movement of the algorithm on any $r$-bounded instance by
$O_d(r)$. 
Proving an absolute bound on the distance moved makes the algorithmic task
easier and we design such a bounded chasing algorithm
in Section \ref{sec:alg_bounded}. 

Surprisingly, the natural approaches for $r$-bounded instances based on the Ellipsoid Method or
the centroid approach do not work. In particular, consider a $1$-bounded
instance where the initial body is $F_1= B(\ov,1)$,
and the algorithm starts at the origin. As nested convex bodies arrive, if the
current point $v_{i-1}$ is infeasible for the request $F_i$, a natural approach might be to
move to the centroid of $F_i$ or to the center of the minimum volume ellipsoid
enclosing $F_i$ (see Figure \ref{fig:intro_ellipsoid}). 
In Section~\ref{sec:lb}, we describe a simple $1$-bounded instance in $\R^2$,
where the above algorithms travel an unbounded distance. 

\begin{figure}
\hfill\includegraphics[scale=0.9]{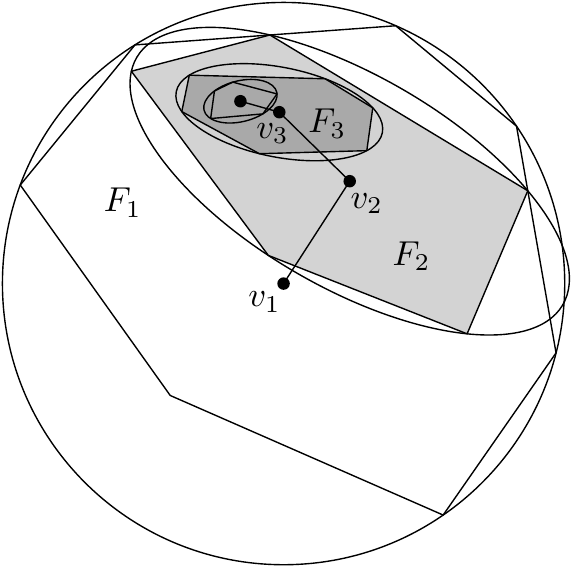}\hfill\ \caption{A few steps of the ellipsoid algorithm.}
\label{fig:intro_ellipsoid}
\end{figure}

We design our $d$-dimensional bounded chasing
algorithm in Section~\ref{sec:alg} based on a recursive approach together with some
simple geometric properties. It iteratively invokes the
$(d-1)$-dimensional algorithm on at most $d$ bounded instances defined on some
suitably chosen hyperplanes. When these instances end, we can argue that the
future requests must lie in some smaller ball $B(v',\gamma r)$, for some fixed
$\gamma < 1$. 
Roughly, this allows us to bound the competitive ratio by $g(d)$, that satisfies
the recurrence $g(d)  \leq d g(d-1)/(1-\gamma)$.

\section{Preliminaries}\label{sec:prelim}
We define some notation and recall some basic facts from geometry.

\begin{definition}[Nested Convex Body Chasing]
  In the \emph{nested convex body chasing problem} in $\R^d$, the algorithm starts
  at some position $v_0$, and an online sequence of $n$ nested convex
  bodies $F_1 \supset \ldots \supset F_n$ arrive one by one. When
  convex body $F_i$ arrives, the algorithm must move to a point $v_i$
  that lies in $F_i$. The goal is to minimize the total distance
  traveled $\sum_{i=1}^n \twonorm{v_i - v_{i-1}}$.
\end{definition}
Note that the choice of measuring distance using the $\ell_2$-norm---as opposed
to some other symmetric norm, say the $\ell_1$-norm---has a negligible effect on
the competitive ratio that we obtain because all symmetric norms on $\R^d$ are
within a $d^{1/2}$ factor of each other.

Let $B(v,r)$ denote the ball of radius $r$ centered at $v$.
The following useful fact is a variant of John's theorem for balls instead
of ellipsoids (for a proof, see e.g. \cite[Lemma 8.7.3]{MG07}):

\begin{prop}[Minimum-volume enclosing ball]
  \label{prop:ball}
  Let $F$ be a bounded convex body and suppose $B(v,r)$ is the minimum-volume ball enclosing $F$. Then, the center $v$ of the ball $B(v,r)$ is contained in $F$.
\end{prop}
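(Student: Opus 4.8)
The plan is to argue by contradiction: if the center $v$ of the minimum-volume enclosing ball did not lie in $F$, we could nudge it slightly towards $F$ and obtain a strictly smaller enclosing ball. First I would use that a bounded convex body is compact, so the Euclidean projection $p$ of $v$ onto $F$ is well defined, and under the assumption $v \notin F$ we have $\delta := \twonorm{v-p} > 0$. The key geometric input is the standard obtuse-angle characterization of projection onto a convex set: for every $x \in F$, $\langle x - p,\, v - p\rangle \le 0$ (otherwise a point on the segment from $p$ towards $x$ would be strictly closer to $v$ than $p$, contradicting the choice of $p$).

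Next I would consider the perturbed center $v' := \tfrac{1}{2}(v+p)$ and expand, for an arbitrary $x \in F$,
\[
\twonorm{x - v'}^2 = \twonorm{x-v}^2 + \langle x-v,\, v-p\rangle + \tfrac14 \twonorm{v-p}^2 .
\]
Since $\langle x-v,\, v-p\rangle = \langle x-p,\, v-p\rangle - \twonorm{v-p}^2 \le -\delta^2$ by the obtuse-angle inequality, this gives $\twonorm{x-v'}^2 \le \twonorm{x-v}^2 - \tfrac34\delta^2 \le r^2 - \tfrac34\delta^2$ for every $x \in F$. By compactness of $F$, $r' := \max_{x \in F} \twonorm{x - v'} < r$, so $B(v', r')$ is an enclosing ball of strictly smaller radius, hence of strictly smaller volume, contradicting the minimality of $B(v,r)$. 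Therefore $v \in F$.

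There is no real obstacle here; the only points needing care are (i) justifying the obtuse-angle inequality, which is just the first-order optimality condition for projection onto the convex set $F$, and (ii) upgrading the pointwise bound $\twonorm{x-v'} < r$ to a uniform bound $r' < r$, where compactness of the bounded convex body $F$ is used. One could alternatively phrase the whole argument geometrically: strictly separate $v$ from $F$ by a hyperplane, so that $F$ lies in a spherical cap of $B(v,r)$ that is strictly smaller than a hemisphere, and such a cap is contained in a ball of radius $< r$; the perturbation computation above is simply a quantitative version of this observation.
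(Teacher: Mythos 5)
Your proof is correct, and it is essentially the standard argument for this fact; the paper does not prove Proposition~\ref{prop:ball} itself but cites it from \cite[Lemma~8.7.3]{MG07}, where the same perturb-the-center reasoning appears. One small remark: your explicit estimate $\twonorm{x-v'}^2 \le r^2 - \tfrac34\delta^2$ already holds uniformly over $x\in F$, so you do not actually need compactness to pass from a pointwise to a uniform bound --- the uniform gap $\tfrac34\delta^2$ falls out of the algebra directly (compactness is only implicitly used to ensure the projection $p$ exists, which holds since a convex body is closed).
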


Next we need the following standard fact that we prove here for completeness. We will use it to show that either we can reduce to a $(d-1)$-dimensional instance or a $d$-dimensional instance that is contained in a ball with smaller radius.
We use $\ov$ to denote the origin.

\begin{prop}[Dimension reduction or radius reduction]
  \label{prop:reduction}
Let $d \geq 2$ and $F$ be a bounded convex body in $\R^d$ contained in
$B(\ov,r)$. Then, either $F$ intersects some axis-aligned hyperplane,
or it is contained in some orthant of $B(\ov,r)$. Moreover, in the latter case, the smallest ball $B(s,r')$ enclosing $F$ has radius $r' \leq r (1 - 1/d)^{1/2}$.
\end{prop}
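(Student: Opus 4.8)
The plan is to argue by contraposition on the first dichotomy: assume $F$ does not meet any of the $d$ axis-aligned hyperplanes $\{x_j = 0\}$, and deduce that $F$ lies in a single orthant. Since $F$ is connected and disjoint from the hyperplane $\{x_j=0\}$, the continuous function $x \mapsto x_j$ has constant sign on $F$, so there is a sign $\sigma_j \in \{+,-\}$ with $\sigma_j x_j > 0$ for all $x \in F$. Doing this for every coordinate $j = 1, \dots, d$ pins $F$ inside the orthant $O = \{x : \sigma_j x_j \ge 0 \text{ for all } j\}$, which establishes the first part.

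For the quantitative claim, I would work in that orthant; by reflecting coordinates we may assume $O$ is the nonnegative orthant, so every $x \in F$ has $x_j \ge 0$ for all $j$. Let $B(s, r')$ be the minimum-volume enclosing ball of $F$. By Proposition~\ref{prop:ball} the center $s$ lies in $F$, hence $s_j \ge 0$ for all $j$, and also $s \in B(\ov, r)$, i.e.\ $\twonorm{s} \le r$. Now I want to bound $r' = \max_{x \in F} \twonorm{x - s}$. The key geometric observation is that since both $x$ and $s$ are nonnegative vectors, $x - s$ has a nonnegative dot product contribution that lets us control its length: writing $\twonorm{x-s}^2 = \twonorm{x}^2 - 2\langle x, s\rangle + \twonorm{s}^2$ and using $\langle x, s \rangle = \sum_j x_j s_j \ge 0$ together with $\twonorm{x} \le r$ gives $\twonorm{x - s}^2 \le r^2 + \twonorm{s}^2 - 2\langle x, s\rangle$. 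This alone is not quite enough; the trick is to instead center the ball optimally, or rather to observe that the enclosing ball of $F$ is no larger than the enclosing ball of the whole orthant-cap $B(\ov,r) \cap O$, and for that set one can pick the explicit center $s^\star = \tfrac{r}{\sqrt d}(1,\dots,1)$ — the "corner" point at distance $r$ along the diagonal — and check that every $x$ with $x_j \ge 0$ and $\twonorm{x} \le r$ satisfies $\twonorm{x - s^\star}^2 = \twonorm{x}^2 - 2\tfrac{r}{\sqrt d}\sum_j x_j + r^2 \le 2r^2 - \tfrac{2r}{\sqrt d}\twonorm{x}_1 \le 2r^2 - \tfrac{2r}{\sqrt d}\cdot \twonorm{x}$, and then bound $\twonorm{x}$ from below — actually the clean route is $\twonorm{x}_1 \ge \twonorm{x}$ is the wrong direction, so I would instead directly maximize $\twonorm{x - s^\star}^2$ over the cap and find the maximum is $r^2(1 - 1/d)$, attained when $x$ is a vertex of the cap on a coordinate axis (e.g. $x = (r,0,\dots,0)$, giving $(r - r/\sqrt d)^2 + (d-1)r^2/d = r^2 - 2r^2/\sqrt d + r^2/d + (d-1)r^2/d = r^2(2 - 2/\sqrt d)$, which is already too big for $d \ge 2$)—so the correct scaling of the center must be chosen by optimization, not guessed.

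Let me restate the quantitative step more carefully, since that is the real content. The honest approach: let $s$ be the center of the min-volume enclosing ball of $F$; we know $s \in F \subseteq O$ and $\twonorm{s} \le r$. For any $x \in F$, since $x, s \ge 0$ coordinatewise, $\langle x, s \rangle \ge 0$, so $\twonorm{x - s}^2 = \twonorm{x}^2 - 2\langle x,s\rangle + \twonorm{s}^2 \le r^2 + \twonorm{s}^2$. That gives $r' \le \sqrt{r^2 + \twonorm{s}^2}$, which is $\ge r$ — wrong direction again. The issue is that $s$ could be the origin. The resolution, and the step I expect to be the main obstacle, is that one cannot bound $r'$ by picking a clever fixed center; instead one must use that $B(s,r')$ is the \emph{minimum}-volume ball, so $r'$ is at most the radius of \emph{any} enclosing ball, in particular the ball centered at the centroid of the cap or at a well-chosen diagonal point $s^\star = \lambda(1,\dots,1)$ with $\lambda$ optimized. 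Maximizing $\max_{x \in \text{cap}} \twonorm{x - \lambda \mathbf 1}^2$ over $x$ and then minimizing over $\lambda$ is a small convex/LP computation whose value is exactly $r^2(1 - 1/d)$; I would carry this out by noting the maximizing $x$ over the cap for fixed $\lambda > 0$ is a scaled coordinate vector $r e_k$, compute $\twonorm{r e_k - \lambda \mathbf 1}^2 = (r-\lambda)^2 + (d-1)\lambda^2 = r^2 - 2r\lambda + d\lambda^2$, and minimize over $\lambda$ to get $\lambda = r/d$ and optimal value $r^2 - r^2/d = r^2(1-1/d)$. Hence $r' \le r\sqrt{1 - 1/d}$, completing the proof. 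The one thing to double-check is that $r e_k$ really is the farthest point of the cap $B(\ov,r)\cap O$ from $\lambda\mathbf 1$; this follows because the farthest point of a convex set from a given point is an extreme point, and the extreme points of the cap are the origin, the vectors $r e_k$, and points on the sphere in the interior of $O$ — and for the latter, $\twonorm{x - \lambda\mathbf 1}^2 = r^2 - 2\lambda\twonorm{x}_1 + d\lambda^2$ is decreasing in $\twonorm{x}_1 \ge 0$, so it is maximized at the extreme point minimizing $\twonorm{x}_1$, which among sphere points with $x \ge 0$ is a coordinate vector $r e_k$ (where $\twonorm{x}_1 = r$), and the origin gives $d\lambda^2 \le r^2(1-1/d)$ for our $\lambda$.
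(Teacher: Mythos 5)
Your final argument is correct and uses the same center $s^\star = (r/d)(1,\dots,1)$ and the same bound $r^2(1-1/d)$ as the paper's proof. The difference is purely organizational: the paper splits on whether $\onenorm{x} \le r$ (reducing to the simplex vertices $\ov, re_1,\dots,re_d$) or $\onenorm{x} > r$ (a direct computation using $\twonorm{x}\le r$), whereas you invoke once that a convex function on a compact convex set attains its maximum at an extreme point and then use monotonicity of $r^2 - 2\lambda\onenorm{x} + d\lambda^2$ in $\onenorm{x}$ to identify $re_k$ as the worst sphere point — effectively the same calculation. The exploratory dead ends you explicitly flag as wrong mid-writeup should simply be deleted; the final two paragraphs are the actual proof.
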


\begin{proof}
The first part immediately follows by convexity. For the second part, we
assume, without loss of generality, that $r = 1$ and that $F$ is contained in
the positive orthant of $B(\ov,1)$. We now show that every point $x$ in the
positive orthant of $B(\ov,1)$ is within distance at most $r':=(1-1/d)^{1/2}$ from the point $s = (1/d, \ldots, 1/d)$. There are two cases: (1) $\onenorm{x} \leq 1$; (2) $\onenorm{x} > 1$.

In case (1), $x$ is in the convex hull of $e_1, \dotsc, e_d$, and $\ov$,
where $e_k$ denotes the $k$-th vector of the standard basis with $1$ in the $k$-th
coordinate and $0$ elsewhere.
Therefore, it suffices to prove that $\ov$ and $e_1, \dotsc, e_d$ are within distance $r'$ from $s$.
Indeed, $\twonorm{s - \ov} = (1/d)^{1/2} \leq r'$ (as $d \geq 2$) and
$\twonorm{s - e_k}=  r'$ for each $k \in [d]$.

In case (2), we have 
     \[ \twonorm{s - x}^2 = \sum_{k=1}^d \left(x_k - \frac{1}{d}\right)^2 = \sum_{k=1}^d \left(x_k^2 - \frac{2x_k}{d} + \frac{1}{d^2}\right) = \twonorm{x}^2 - \frac{2\onenorm{x}}{d} + \frac{1}{d} \leq 1 - 1/d, \] 
                                    where the inequality uses that $x \in B(\ov,1)$ and hence $\twonorm{x}\leq 1$ and $\onenorm{x} > 1$. 
		Thus, the positive orthant of $B(\ov,1)$ is contained in $B(s,r (1-1/d)^{1/2})$.
\end{proof}

\section{Algorithm}\label{sec:alg}

We now describe our algorithm and prove Theorem \ref{thm:alg}.

We first show, using a guess-and-double approach, that finding a good algorithm for the general nested convex body chasing problem can be reduced to an
easier problem of designing an algorithm for which we can upper bound the {\em absolute} distance traveled, on bounded instances of the following type.

\begin{definition}[$r$-Bounded Instances]
An instance with starting point $v_0$ and requests $F_1 \supset \ldots \supset F_n$ is said to be \emph{$r$-bounded} if every request $F_i$ is contained in $B(v_0,r)$.
\end{definition}

Note that a general instance may not be $r$-bounded for any finite $r$, e.g.~in a covering LP where all the $F_i$ are halfspaces of the type $a_i^\top x \geq 1$ where $a_i$ has all entries non-negative. 

As we shall see, the task of showing an absolute bound on the distance traveled (instead of a relative bound needed for competitive ratio) makes the problem cleaner. We now describe the reduction.

\begin{lemma}[Reduction to Bounded Chasing]
  \label{lem:reduction}
 For any fixed $r > 0$, if there exists an algorithm $\chase_d$ for $r$-bounded instances that travels a total distance of at most $g(d) \cdot r$, then there exists a $f(d)$-competitive algorithm for general instances with $f(d) = 4(g(d) + 1)$.
\end{lemma}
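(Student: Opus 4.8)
The plan is to use a standard guess-and-double scheme on the "scale" of the instance. We run the bounded chasing algorithm $\chase_d$ in phases, where in each phase we guess an upper bound $r$ on how far the optimal solution travels (equivalently, a radius within which we confine our attention), and double this guess every time it is exceeded. Concretely, phase $j$ uses guess $r_j = 2^j r_0$ for a suitable starting value $r_0$; we reset the "virtual starting point" of $\chase_d$ to our current position at the beginning of each phase and feed it the incoming requests.

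First I would set up the phase invariant. Let $p_j$ denote the algorithm's position at the start of phase $j$. Within phase $j$, we only pass a request $F_i$ to $\chase_d$ as long as $F_i \subseteq B(p_j, r_j)$; as soon as a request arrives that is not contained in this ball, we know (since the bodies are nested and contain all future bodies... actually more carefully: since $F_i$ is not inside $B(p_j,r_j)$) that OPT must have incurred significant cost, and we end the phase. When $\chase_d$ is run on the requests of phase $j$, which by construction form an $r_j$-bounded instance with start $p_j$, Lemma's hypothesis guarantees the algorithm travels at most $g(d)\cdot r_j$ during that phase. So the total movement over all phases up to and including phase $J$ is at most $\sum_{j \le J} g(d) \cdot r_j \le 2 g(d) \cdot r_J$, using the geometric sum. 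The extra work of resetting — moving from $p_j$ to wherever $\chase_d$ wants to start — is either zero or can be folded into $\chase_d$'s own bound; this is a routine bookkeeping point.

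Next I would lower-bound OPT. The key observation is that a phase ends only because some request $F_i$ escaped the ball $B(p_j, r_j)$, i.e.\ every point of $F_i$ is at distance $> r_j$ from $p_j$. Since $F_i \supseteq F_n$ (nestedness), the optimal offline solution must at some point be inside $F_i$, hence must reach a point at distance $> r_j$ from $p_j$. But $p_j$ is within distance $\sum_{k<j} g(d) r_k \le 2 g(d) r_{j-1} = g(d) r_j$ of $v_0$ — hmm, this needs care, so instead I would argue directly that the point where OPT first enters $F_i$ is far from $v_0$, or compare OPT's cost to the escape radius. The clean way: if phase $J$ is the last completed phase, then OPT had to travel at least roughly $r_J$ (escaping the previous ball forces displacement $\Omega(r_J)$ relative to a point near $v_0$), so $\OPT \ge r_J / c$ for a small constant $c$, while $\ALG \le 2 g(d) r_{J+1} + (\text{cost in the final incomplete phase}) \le 4 g(d) r_J + g(d) r_{J+1}$. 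Choosing $r_0$ small (say $r_0$ = distance from $v_0$ to $F_1$, or infinitesimally small and letting the first phase that does any work set the scale) makes the additive slack vanish, and tracking the constants gives $f(d) = 4(g(d)+1)$.

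The main obstacle I expect is getting the constant exactly $4(g(d)+1)$ rather than some larger absolute constant times $g(d)$: this requires being careful about (i) how the very first phase is seeded so there is no additive term independent of $\OPT$, (ii) charging the movement in the final, incomplete phase (which is bounded by $g(d) r_{J+1} = 2 g(d) r_J$) together with the completed phases ($\le 2 g(d) r_J$) against $\OPT \ge \tfrac12 r_J$ — note the escape condition gives displacement $> r_J$ from $p_J$, and bounding $\twonorm{p_J - v_0}$ shows $\OPT \gtrsim r_J$ — and (iii) absorbing the "reset" moves. Everything else is a direct geometric-series estimate; the geometry is trivial (nestedness makes the escape argument immediate), so the whole proof is essentially accounting.
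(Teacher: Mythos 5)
Your high-level plan (guess-and-double on the radius) is indeed the approach the paper takes, but the execution has two genuine gaps, both of which the paper's proof is specifically designed to avoid.

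First, you only pass $F_i$ to $\chase_d$ when $F_i \subseteq B(p_j, r_j)$, and you end a phase as soon as $F_i$ is ``not contained'' in that ball. This conflates \emph{not contained} with \emph{disjoint from the ball}, and only the latter gives an $\OPT$ lower bound: a huge halfspace through $v_0$ is never contained in any ball yet costs $\OPT$ nothing, so with your escape condition the phase terminates immediately while $\OPT$ is near zero. Worse, in the general instances this lemma targets, the $F_i$ are typically unbounded (that is the whole point of the reduction), so $F_i \subseteq B(p_j,r_j)$ \emph{never} holds and your phase processes no requests at all. The missing idea is to feed $\chase_d$ the \emph{truncated} request $F_i \cap B(v_0, 2^j)$, which is always contained in the ball; the phase then ends precisely when this truncation becomes empty, i.e.\ when the closest point of $F_i$ to $v_0$ is at distance $> 2^j$. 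Only at that moment can you charge the phase to $\OPT$.

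Second, restarting $\chase_d$ from the current position $p_j$ is not just a bookkeeping inconvenience you can defer; it breaks the $\OPT$ lower bound. By the time phase $j$ starts, $\twonorm{p_j - v_0}$ can be as large as $\Theta(g(d)\, r_{j-1})$, and since $g(d)$ here is enormous (eventually $6^d(d!)^2$), this drift dwarfs $r_j$. Escaping $B(p_j, r_j)$ then says nothing about the distance from $v_0$, so the step where you want ``$\OPT \gtrsim r_J$'' does not go through. You half-noticed this (``hmm, this needs care'') but the suggested patch (choose $r_0$ small, or argue about where $\OPT$ first enters $F_i$) does not address the $g(d)$-sized drift. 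The paper's fix is to make the algorithm \emph{walk back to $v_0$} at the start of every stage and run $\chase_d$ afresh from $v_0$ with radius $2^j$; the return trip costs at most $2^{j-1}$, the reference point never drifts, $\OPT = \delta_n \ge 2^{j^{\ast}-1}$ falls out immediately, and the geometric sum gives exactly $4(g(d)+1)$. Without this recentering your accounting does not close, and indeed your own estimate bottoms out at something like $c\, g(d)$ for an unverified $c$, not the claimed $4(g(d)+1)$.
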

\begin{proof}
Consider a general instance with  starting point $v_0$. Let $\delta_i$ be the distance between $v_0$ and the closest point in $F_i$; note that $\delta_i$ is non-decreasing in $i$ because $F_i$'s are nested. Wlog, we can assume that $v_0 \notin F_1$ and $\delta_1 = 1$ (by scaling).

The algorithm for the general instance proceeds in \emph{stages}. For $j=1,2,\ldots$, stage $j$ consists of all requests $F_i$ for which $\delta_i \in [2^{j-1}, 2^j)$, i.e.,~stage $j$ begins with the first request $F_i$ that intersects with $B(v_0,2^{j})$ but not with $B(v_0,2^{j-1})$, and ends with the last request $F_{i'}$ that intersects with $B(v_0,2^j)$.

The algorithm will run a new instance of $\chase_d$  at each stage $j$. Let $F_{s(j)}$ be the first request of stage $j$ and $F_{\ell(j)}$ be the last. At the start of stage $j$, the algorithm starts at the point $v_0$, and begins an instantiation $\chase_d(v_0,2^j)$ of $\chase_d$
that it runs over the course of the stage with requests
\[F_{s(j)} \cap B(v_0,2^j), F_{s(j)+1} \cap B(v_0, 2^j), \ldots, F_{\ell(j)} \cap B(v_0,2^j).\]
Note that these requests form a $2^j$-bounded instance.

We now bound the performance of the algorithm. 
Clearly, $\OPT =  \delta_n$. Let $j^*$ denote the index of the final stage, and hence $\OPT \geq 2^{j^*-1}$.
   For each $1 \leq j \leq j^*$, the movement cost during stage $j$ has two parts and can be bounded as follows: 
\begin{itemize}
\item The movement of $\chase_d(v_0,2^j)$. This is at most $g(d) \cdot 2^j$ by the assumed guarantee on $\chase_d$. 
\item Returning to $v_0$ from its previous location, just before stage $j$ begins. This cost is at most $2^{j'}\leq 2^{j-1}$ where $j'<j$ is the stage that ends just before stage $j$. This is because the algorithm was following $\chase_d(v_0,2^{j'})$ which always stays within $B(v_0,2^{j'})$ (as all $\chase_d(v_0,r)$ requests lie in $B(v_0,r)$).
\end{itemize}
So the total distance traveled by our algorithm is at most 
 \[\sum_{j \leq j^*} 2^j(g(d) + 1) \leq 2^{j^*+1} (g(d) + 1) \leq 4(g(d)+1)\OPT. \qedhere \]
\end{proof}

\subsection{Bounded Chasing Algorithm}
\label{sec:alg_bounded}
We now focus on designing an algorithm for the Bounded Chasing problem. The following theorem is our main technical result.
 
\begin{theorem}[Bounded Chasing Theorem]
  \label{lem:bounded-chasing}
  There exists an algorithm $\chase_d$ that travels at most $g(d)\cdot r$ distance on $r$-bounded instances where $g(d) = 6^d(d!)^2$.
\end{theorem}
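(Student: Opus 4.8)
The plan is to prove this by induction on $d$. We design $\chase_d$ so that whenever its point becomes stale it either \emph{reduces the dimension}, by invoking $\chase_{d-1}$ on a slice of the request, or \emph{reduces the radius} of the ball that contains all future requests; these two options are exactly the two alternatives of Proposition~\ref{prop:reduction}. The base case $d=1$ is easy: the $F_i$ are nested intervals inside $[v_0-r,v_0+r]$, so their left endpoints are non-decreasing and their right endpoints non-increasing, hence the greedy algorithm that always moves to the nearest point of $F_i$ is monotone after its first move and travels total distance at most $r$; thus $g(1)=1$ works.

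Now let $d\ge2$ and assume $\chase_{d-1}$ travels at most $g(d-1)\rho$ on every $\rho$-bounded instance in $\R^{d-1}$. After scaling, take $r=1$ and $v_0=\ov$. The algorithm keeps a \emph{working ball} $B(c,\rho)$, initially $B(\ov,1)$, and runs in \emph{epochs}, one per working ball. Within an epoch with working ball $B(c,\rho)$, write $\hp_k:=\{x\in\R^d:x_k=c_k\}$ for the $d$ axis-parallel hyperplanes through $c$, and run $d$ parallel copies $\chase_{d-1}^{(1)},\dots,\chase_{d-1}^{(d)}$, where $\chase_{d-1}^{(k)}$ lives inside $\hp_k$, starts at $c\in\hp_k$, and is fed the slice $F_i\cap\hp_k$ whenever it is nonempty. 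Since $F_i\cap\hp_k\subseteq B(c,\rho)\cap\hp_k$, a $(d-1)$-ball of radius $\rho$ about $c$, this is a $\rho$-bounded instance in $\R^{d-1}$; and since the $F_i$ are nested, the slices are nested and a slice that becomes empty stays empty (its copy is then \emph{dead}). At each step the algorithm holds an \emph{active} index $k^\star$ among the non-dead copies and sits at the current point of $\chase_{d-1}^{(k^\star)}$, which lies in $F_i\cap\hp_{k^\star}\subseteq F_i$. When $F_i$ arrives: if it meets some $\hp_k$, feed each nonempty slice to its copy, then keep $k^\star$ active and follow its move if $k^\star$ is still non-dead, else jump to the current point of an arbitrary non-dead copy and make it active; if $F_i$ meets no $\hp_k$, then by convexity $F_i$ lies in an orthant of $B(c,\rho)$, so by Proposition~\ref{prop:reduction} the smallest ball enclosing $F_i$ has radius at most $\gamma\rho$ with $\gamma:=(1-1/d)^{1/2}$, and by Proposition~\ref{prop:ball} its center $s$ lies in $F_i$; the algorithm then moves to $s$ and opens a new epoch with working ball $B(s,\gamma\rho)$, feeding it $F_i$ and all later requests (legitimate because $F_j\subseteq F_i\subseteq B(s,\gamma\rho)$ for all $j\ge i$, by nestedness).

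For the analysis, fix an epoch with working ball $B(c,\rho)$ and split its cost into: (i) follow moves --- the algorithm moves only when the active copy moves, and since a copy is deactivated only upon dying, it is active on a single interval and never again, so at most $d$ copies are ever active, each moving at most $g(d-1)\rho$ in total, for a follow cost of at most $d\,g(d-1)\rho$; (ii) re-activation jumps --- at most $d-1$ of them, each of length at most $2\rho$ since both endpoints lie in $B(c,\rho)$; and (iii) the single closing jump to $s$, of length at most $2\rho$. So an epoch costs at most $(d\,g(d-1)+2d)\rho$. Since the instance remaining after the first epoch is $\gamma\rho$-bounded, the worst-case movement $T(\rho)$ of $\chase_d$ obeys $T(\rho)\le(d\,g(d-1)+2d)\rho+T(\gamma\rho)$, and unrolling this geometric recursion,
\[
T(\rho)\ \le\ \frac{d\,g(d-1)+2d}{1-\gamma}\,\rho .
\]
Using $1-\gamma=1-(1-1/d)^{1/2}=\tfrac{1/d}{1+(1-1/d)^{1/2}}\ge\tfrac1{2d}$ we may take $g(d)=\tfrac{d\,g(d-1)+2d}{1-\gamma}\le 2d^2(g(d-1)+2)$, and a one-line induction confirms $g(d)=6^d(d!)^2$ satisfies this together with $g(1)=1$: for $d\ge2$, $2d^2(6^{d-1}((d-1)!)^2+2)=\tfrac13 6^d(d!)^2+4d^2\le 6^d(d!)^2$.

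The main obstacle is getting the parallel-chaser accounting right: one must see that the algorithm is charged only for the single copy it is currently following (not for all $d$ copies at once), that re-activations are both few and cheap, and --- the heart of the matter --- that ``$F_i$ avoids every $\hp_k$'' confines \emph{all later} requests to a ball of radius $\gamma\rho$, which is exactly where nestedness is combined with Proposition~\ref{prop:reduction}. A minor point is that the number of epochs may be unbounded, but this is harmless since the bound above is a convergent geometric series.
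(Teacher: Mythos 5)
Your proposal is correct and follows essentially the same route as the paper: induction on $d$, phases/epochs tied to a shrinking working ball, $\chase_{d-1}$ run on the $d$ axis-aligned hyperplanes through the center, and a recentering step driven by Propositions~\ref{prop:ball} and~\ref{prop:reduction}, all combined via the same geometric-series recurrence $g(d)\le 2d\cdot d\,(g(d-1)+2)$. The one (cosmetic) difference is bookkeeping: you run $d$ copies of $\chase_{d-1}$ \emph{in parallel} from the start of the epoch and hop between them as slices die, whereas the paper restarts a fresh $\chase_{d-1}$ from the epoch's center $s$ on each hyperplane switch; both variants charge at most $g(d-1)\rho$ per hyperplane and at most $2\rho$ per switch, so the cost accounting and the resulting bound $g(d)=6^d(d!)^2$ are identical.
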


Before we prove Theorem \ref{lem:bounded-chasing}, let us note that  Theorem~\ref{thm:alg} immediately follows by 
combining Theorem~\ref{lem:bounded-chasing} and Lemma~\ref{lem:reduction} and noting that $g(d)\geq 1$. 

We now construct the algorithm $\chase_{d}$ and prove Theorem~\ref{lem:bounded-chasing}. The proof is by induction on $d$. The base case ($d = 1$) is trivial: the requests form nested intervals and the greedy algorithm that always moves to the closest feasible point is $1$-competitive, so $g(d) = 1$. In the remainder of this section, we focus on the $d \geq 2$ case and assume that there exists a $(d-1)$-dimensional algorithm $\chase_{d-1}$ with the required properties.

\vspace{3mm}

\paragraph{Algorithm.}
Consider an $r_0$-bounded instance with starting point $s_0$. 
The high level idea of the algorithm is to reduce the instance into a sequence of $(d-1)$-dimensional instances and run $\chase_{d-1}$ on these instances. 

The algorithm  runs in phases. Each phase starts at some center $s$ with radius parameter $r \leq r_0$.
The first phase starts at $s=s_0$ with radius $r=r_0$. 
In each phase, we run $\chase_{d-1}$ with center $s$ and radius $r$ on the $(d-1)$-dimensional instances induced by the $d$ axis-aligned hyperplanes $H_1,\ldots,H_d$ containing $s$. These are called {\em hyperplane steps}.
When some request $F_i$ arrives that does not intersect with any of these hyperplanes $H_1, \ldots, H_d$, we perform a {\em recentering step} by computing the smallest ball $B(s',r')$ enclosing $F_i$ and moving to $s'$; the current phase then ends, and a 
new phase starts with center $s'$ and radius $r'$. A key property we will use in the analysis (based on Proposition~\ref{prop:reduction}) is that $r' \leq (1-1/d)^{1/2} r$, which will allow us to argue that algorithm makes progress.

\paragraph{Description of a phase.}
We now describe how a phase works. The reader may find it helpful to refer to Figure~\ref{fig:alg_2d} while reading the description below.

Consider a phase that starts at center $s$ and radius $r$. For notational convenience, we reindex the requests so that the first request of the phase is $F_1$.  Let $H_1,\ldots,H_d$ denote the axis-aligned hyperplanes passing through $s$.

\vspace{1mm}

{\noindent \bf  Hyperplane Steps.} Initially at request $F_1$, we choose the axis-aligned hyperplane $H_k$ with the smallest index $k \in [d]$ that intersects $F_1$  (if no such hyperplane exists, we move to the Recentering step below), and run $\chase_{d-1}$ on the $(d-1)$-dimensional instance induced by $H_k$ and follow  it  for as long as we can.
 More specifically, we run $\chase_{d-1}$ on the $(d-1)$-dimensional instance with starting point $s$ and radius $r$, and requests
\[F_1 \cap H_k, \ldots, F_{\ell(k)} \cap H_k,\]
where $F_{\ell(k)}$ is the last request in the current phase that intersects $H_k$; for $i \leq \ell(k)$, we serve request $F_i$ by moving to $v_k(i)$ where $v_k(i)$ is the location of $\chase_{d-1}$ on request $F_i \cap H_k$. 

When the first request $F_i$ arrives that does not intersect the current $H_k$, i.e.,~$F_i = F_{\ell(k)+1}$, then we change the hyperplane  and repeat the above process. That is,
we pick $H_{k'}$ that intersects $F_i$, with the smallest index $k'$ (if it exists),  and run $\chase_{d-1}$ on $H_{k'}$ starting at position $s$ with radius $r$ and  requests $F_{\ell(k)+1} \cap H_{k'},\ldots$ and follow it for as long as we can.

\vspace{1mm}

{\noindent \bf Recentering Step.}
If a request $F_i$ arrives that does not intersect any $H_{k}$ for $k\in [d]$, we compute the smallest ball $B(s',r')$ containing $F_i$, move to $s'$ and serve $F_i$ (note that $s' \in F_i$ by Proposition~\ref{prop:ball}).
The current phase ends, and a new phase with center $s'$ and radius $r'$ starts.

This completes the description of a phase  and we now turn to analyzing the algorithm.

\begin{figure}
\hfill\includegraphics[width=\textwidth]{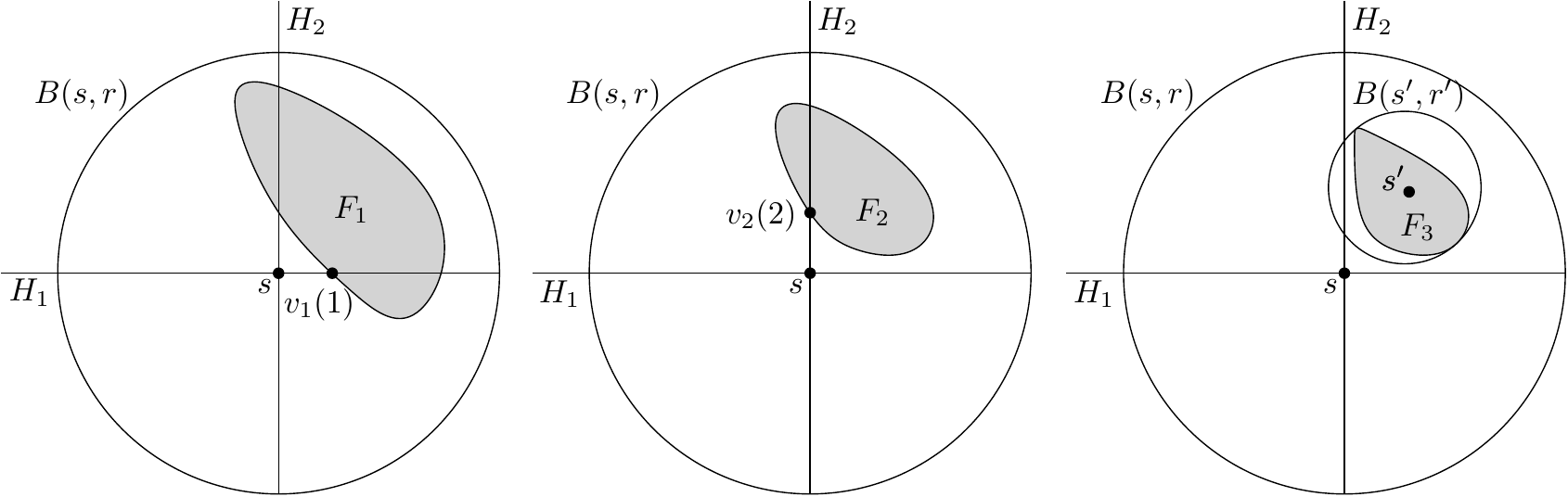}\hfill\ \caption{A phase of $\chase_2$, that starts at $s$. The first request $F_1$ is served using $\chase_1$ in $H_1$.
The second request $F_2$ does not intersect $H_1$ so it is served using $\chase_1$ in $H_2$. Finally, $F_3$ does not intersect $H_2$ either
and a recentering step is performed.}
\label{fig:alg_2d}
\end{figure}

\vspace{3mm}

\paragraph{Analysis.}
We need to show that $\chase_{d}$ is always feasible (Claim \ref{clm:feas}) and bound the distance it travels (Claim \ref{clm:cost}). These claims give us Theorem~\ref{lem:bounded-chasing}.

\begin{claim}[Feasibility]
  \label{clm:feas}
  The algorithm $\chase_{d}$ is always feasible, i.e. $v_i \in F_i$ for all $i$.
\end{claim}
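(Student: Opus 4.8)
The plan is to prove feasibility by induction on $d$, mirroring the inductive structure of the algorithm itself. The base case $d=1$ is immediate: $\chase_1$ is the greedy algorithm on nested intervals, which is always feasible by construction. For the inductive step, assume $\chase_{d-1}$ is always feasible on the $(d-1)$-dimensional instances it is run on. I then need to check that every move the $d$-dimensional algorithm makes lands in the current request $F_i$, and this splits according to which kind of step served $F_i$.

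First I would handle hyperplane steps. When $F_i$ is served during a hyperplane step using $H_k$, the algorithm moves to $v_k(i)$, the position of $\chase_{d-1}$ on the request $F_i \cap H_k$. By the inductive hypothesis applied to the $(d-1)$-dimensional instance induced by $H_k$ — whose requests are $F_1 \cap H_k, \ldots, F_{\ell(k)} \cap H_k$, a nested sequence of convex bodies inside the $(d-1)$-dimensional ball $B(s,r) \cap H_k$ — the point $v_k(i)$ lies in $F_i \cap H_k \subseteq F_i$. One small thing to verify here is that the induced instance is genuinely a valid $r$-bounded $(d-1)$-dimensional instance: the sets $F_i \cap H_k$ are convex, nested, nonempty (since we only run on $H_k$ while it intersects the current request), and contained in a $(d-1)$-dimensional ball of radius at most $r$; also the starting point $s$ lies on $H_k$ by definition of the hyperplanes. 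I should also note that the algorithm always finds \emph{some} $H_k$ intersecting $F_i$ as long as it stays in hyperplane-step mode, precisely because the moment no such hyperplane exists we switch to a recentering step.

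Next I would handle recentering steps. When $F_i$ arrives and intersects none of $H_1,\ldots,H_d$, the algorithm moves to $s'$, the center of the minimum-volume ball enclosing $F_i$. By Proposition~\ref{prop:ball}, $s' \in F_i$, so this move is feasible. The new phase then starts at $s'$ with radius $r'$, and since $F_{i+1} \subseteq F_i \subseteq B(s',r')$ and all subsequent requests are nested inside $F_i$, the new phase is again a valid $r'$-bounded instance, so the induction continues cleanly across phases.

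I expect the only real subtlety — not an obstacle so much as a point requiring care — is the bookkeeping at the boundaries between hyperplane steps and between phases: making sure that when $\chase_{d-1}$ is restarted on a new hyperplane $H_{k'}$ (at request $F_{\ell(k)+1}$) or when a new phase begins, the sequence of requests handed to the subroutine is still nested and still contained in the appropriate lower-dimensional ball, so that the inductive hypothesis genuinely applies. Since nestedness is inherited from the original sequence and containment in $B(s,r)\cap H_{k'}$ (resp.\ $B(s',r')$) follows from $F_i \subseteq B(v_0,r_0)$ together with the recentering guarantee, each of these checks is routine. Once all step types are covered, every $v_i$ is shown to lie in $F_i$, completing the proof.
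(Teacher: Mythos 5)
Your proof is correct and follows essentially the same approach as the paper's: induction on the dimension $d$, with a case split on whether $F_i$ was served by a hyperplane step (appeal to the inductive hypothesis for $\chase_{d-1}$) or a recentering step (appeal to Proposition~\ref{prop:ball}). Your additional checks that the induced $(d-1)$-dimensional sub-instances are genuinely nested, bounded, and start on the hyperplane are implicit in the paper's terser argument but add no new ideas.
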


\begin{proof}
We prove the claim by induction on $d$. For $d=1$, the algorithm is trivial and it is always feasible. Assume that the claim is true for $d-1$.
  Consider some request $F_i$. Observe that $\chase_d$ either performs a hyperplane step or a recentering step. In the former, since we follow $\chase_{d-1}$ on some hyperplane $H_k$ and $\chase_{d-1}$ stays feasible (by induction), so we have that $v_i \in F_i \cap H_k$. 
In the latter, $v_i$ is the center of the smallest ball containing $F_i$ so $v_i \in F_i$ by Proposition \ref{prop:ball}. Thus, in both cases, $v_i \in F_i$.
\end{proof}

Next, we bound the distance travelled by $\chase_d$.

\begin{claim}[Cost]
  \label{clm:cost}
  The total distance travelled by $\chase_d$ on an $r$-bounded instance is at most $g(d)\cdot r$, where $g(d) = 6^d(d!)^2$.
\end{claim}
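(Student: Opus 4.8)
The plan is to prove Claim~\ref{clm:cost} by induction on $d$, analyzing a single phase and then summing over all phases. Fix an $r_0$-bounded instance; I will bound the total distance by $g(d) \cdot r_0$ with $g(d) = 6^d (d!)^2$. The base case $d=1$ is the greedy algorithm on nested intervals, which travels at most $r_0$, so $g(1) = 1 \leq 6 \cdot 1$. For $d \geq 2$, I assume $\chase_{d-1}$ travels at most $g(d-1) \cdot r$ on any $r$-bounded $(d-1)$-dimensional instance.

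First I would bound the cost incurred \emph{within a single phase} that starts at center $s$ with radius $r$. A phase consists of a sequence of hyperplane steps on at most $d$ distinct hyperplanes $H_{k}$, each run of $\chase_{d-1}$ being a fresh $r$-bounded $(d-1)$-dimensional instance (the requests $F_i \cap H_k$ lie in $B(s,r) \cap H_k$, a $(d-1)$-ball of radius at most $r$), followed optionally by one recentering step. Each invocation of $\chase_{d-1}$ costs at most $g(d-1) \cdot r$. I also need to account for the ``jumps'' between consecutive sub-instances: when we switch from $\chase_{d-1}$ on $H_k$ (currently at some point in $B(s,r)$) to $\chase_{d-1}$ on $H_{k'}$ (which we start at $s$), we first move back to $s$ at cost at most $r$; and the recentering step moves from a point in $B(s,r)$ to $s' \in F_i \subseteq B(s,r)$ at cost at most $2r$. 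Since there are at most $d$ hyperplane sub-instances and at most $d$ transitions plus one recentering, the total cost of a phase is at most $d \cdot g(d-1) \cdot r + (d+2) \cdot r \leq (d \cdot g(d-1) + 3d) \cdot r \leq 4d \cdot g(d-1) \cdot r$ (using $g(d-1) \geq 1$), say bounded by $c \cdot d \cdot g(d-1) \cdot r$ for a small constant $c$.

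Next I would sum over phases. The key structural fact is that when a phase with radius $r$ ends via a recentering step, Proposition~\ref{prop:reduction} guarantees the new phase has radius $r' \leq (1-1/d)^{1/2} \, r$. So if the phases have radii $r_0, r_1, r_2, \ldots$, then $r_j \leq \gamma^j r_0$ with $\gamma = (1-1/d)^{1/2} < 1$, and the geometric sum gives total cost at most
\[
\sum_{j \geq 0} c \cdot d \cdot g(d-1) \cdot r_j \;\leq\; \frac{c \cdot d \cdot g(d-1)}{1-\gamma} \cdot r_0.
\]
Since $1 - \gamma = 1 - (1-1/d)^{1/2} \geq \frac{1}{2d}$ (as $1 - \sqrt{1-x} \geq x/2$ for $x \in [0,1]$), we get $1/(1-\gamma) \leq 2d$, so the total is at most $2c \cdot d^2 \cdot g(d-1) \cdot r_0$. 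Choosing constants carefully so that $2c \cdot d^2 \cdot g(d-1) \leq 6 d^2 \cdot g(d-1) = 6 d^2 \cdot 6^{d-1}((d-1)!)^2 = 6^d (d!)^2 = g(d)$ closes the induction. (The factor $6^d(d!)^2$ in the statement is exactly what comes out of the recurrence $g(d) \leq 6 d^2 g(d-1)$; I would reverse-engineer the per-phase constant $c$ to match.)

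The main obstacle I anticipate is the bookkeeping of the transition costs and making sure every $\chase_{d-1}$ sub-instance really is a legitimate $r$-bounded nested instance with the correct starting point---in particular, that each hyperplane sub-instance starts at $s$ (so the ``return to $s$'' cost is correctly charged) and that the nested structure $F_i \cap H_k \supseteq F_{i+1} \cap H_k$ is preserved, which it is since the $F_i$ are nested. A secondary subtlety is handling the very last phase, which may end not by a recentering step but simply by the instance terminating; this only removes the final recentering cost and does not affect the bound. The geometric decay of the radii is what makes the whole argument work, and the inequality $1 - \sqrt{1-1/d} \geq 1/(2d)$ is the quantitative heart of the summation.
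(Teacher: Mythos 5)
Your proposal is correct and follows essentially the same route as the paper: bound the cost of one phase by combining the inductive bound on each of the at most $d$ hyperplane sub-instances with the switching and recentering costs, use Proposition~\ref{prop:reduction} to get the geometric radius decay $r_j \leq (1-1/d)^{1/2} r_{j-1}$, sum the geometric series using $1-\sqrt{1-1/d} \geq 1/(2d)$, and close the recurrence $g(d) \leq 6d^2 g(d-1)$. The only slack is your per-phase constant: with the sharper count of $d-1$ (not $d$) hyperplane switches and the observation $g(d-1) \geq 1$, the per-phase bound tightens from $4d\,g(d-1)\,r_j$ to $3d\,g(d-1)\,r_j$, which after the factor $2d$ from the geometric sum yields exactly $6d^2$ rather than $8d^2$ and hence matches the stated $g(d) = 6^d(d!)^2$.
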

\begin{proof}
  We now bound the distance traveled during each phase. Consider phase $j$. Let $B_j$ denote $\chase_d$'s enclosing ball during the phase and $r_j$ be its radius. Note that during the phase, the algorithm stays within $B_j$ and all requests $F_i$ in the phase are contained in $B_j$. The movement in phase $j$ consists of:
  \begin{itemize}
  \item Movement due to hyperplane steps. On each hyperplane $H_k$, we move at most $g(d-1)\cdot r_j$ by following $\chase_{d-1}$. Thus, the total movement due to hyperplane steps is at most $d\cdot g(d-1)\cdot r_j$
    \item Movement due to switching hyperplanes. We switch hyperplanes at most $d-1$ times, so this is at most $(d-1) \cdot 2r_j$.
  \item Movement due to recentering. This is at most $2r_j$.
  \end{itemize}
  Thus, the total distance traveled in phase $j$ is at most 
  \[d \cdot (g(d-1) + 2) \cdot r_j \leq 3d \cdot g(d-1) \cdot r_j,\] since $g(d-1) \geq 1$.
	
 By Proposition~\ref{prop:ball}, the radii of the enclosing balls decrease geometrically across phases: $r_j \leq r_{j-1} (1 - 1/d)^{1/2}$ for $j > 1$. As $r_1=r$, the total distance traveled by $\chase_d$ over all phases is at most
  \[3d \cdot g(d-1) \cdot \frac{r}{1 - (1 - 1/d)^{1/2}}  \leq 3d \cdot g(d-1) \cdot 2dr = 6d^2g(d-1)r,\]
	where we use that $1-(1-1/d)^{1/2} \geq 1/(2d)$, as $(1-x)^\alpha \leq 1-\alpha x$ for any $x\in[0,1]$ and  $\alpha< 1$.
 
Thus, we conclude that $\chase_d$ travels at most $g(d)\cdot r$, where $g(d) = 6^d(d!)^2$ is the solution to the recurrence $g(d) = 6d^2g(d-1)$ with base case $g(1) = 1$.
\end{proof}

\section{Lower Bounds for Ellipsoid and Centroid}\label{sec:lb}
In this section, we consider some natural ellipsoid-based and centroid-based algorithms for
chasing nested bodies in the $r$-bounded setting, and show that they are 
not competitive. The main reason these algorithms fail is that for (relatively) flat convex bodies, the center of the bounding
ellipsoid, or the centroid, can move by a large distance in directions that do not matter.

Henceforth, for a set $S \subset \R^d$, let $E(S)$ denote the smallest-volume ellipsoid
containing $S$.

\begin{algorithm}[H]
Let $F_t$ be the current bounded convex body on input.

Whenever the current position becomes infeasible:

\hspace{3em}
	Move to the center of $E(F_t)$.
\caption{An ellipsoid-based algorithm}
\label{alg:ellips}
\end{algorithm}

We now construct an $\R^2$ instance in which Algorithm~\ref{alg:ellips} travels an arbitrarily
large distance while the optimal offline cost is constant. In the following, we will use the notation $(x,y)$ for a point in $\R^2$.

The starting point of the instance is $(0,1)$. Each request $F_t$ is
an intersection of four halfspaces $A,B,C,H_t$. The first three halfspaces $A,B,C$
are $y \ge 0, x \ge -1,$ and $x \le 1$,
respectively. The last halfspace $H_t$ will be different for each
$F_t$.

For the first request $F_1$, we set 
\[H_1 = \{(x,y) \;|\, 2y \leq (1-\alpha)x + (1+\alpha)\},\]
for some parameter $\alpha$. 
Note that the boundary of $H_1$ passes through the points $(-1,\alpha)$ and $(1, 1)$, as seen in Figure~\ref{fig:lb_F4}.
The parameter $\alpha$ is chosen so that the center of $E(F_1)$ is strictly to
the right of the $y$ axis, as guaranteed by the following lemma:

\begin{figure}
\hfill\includegraphics{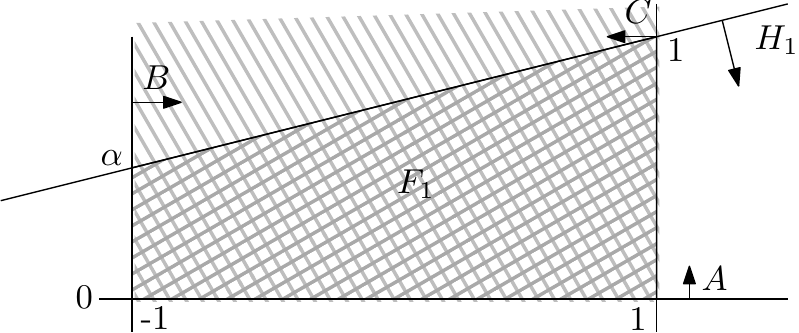}\hfill\ \caption{The first request $F_1$.}
\label{fig:lb_F4}
\end{figure}

\begin{lemma}\label{lem:lb_ellipsoid}
There exists $0<\alpha<1$ such that the center of the smallest ellipsoid
containing $F_1$ has a strictly positive $x$-coordinate. More precisely,
its center is $(c, b)$ with $c,b>0$.
\end{lemma}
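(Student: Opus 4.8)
The plan is to exhibit a concrete region inside $F_1$ that is symmetric about a vertical line strictly to the right of the $y$-axis, and deduce that the center of the enclosing ellipsoid inherits a positive $x$-coordinate in the limit $\alpha \to 0$. First I would compute the vertices of $F_1$. Since $F_1 = A \cap B \cap C \cap H_1$ with $A = \{y \ge 0\}$, $B = \{x \ge -1\}$, $C = \{x \le 1\}$, and $H_1$ the halfspace below the line through $(-1,\alpha)$ and $(1,1)$, the polygon $F_1$ is a quadrilateral with vertices $(-1,0)$, $(1,0)$, $(1,1)$, $(-1,\alpha)$. As $\alpha \to 0$, this degenerates to the triangle $T$ with vertices $(-1,0)$, $(1,0)$, $(1,1)$.

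The key step is a perturbation/continuity argument. I would first establish that the smallest enclosing ellipsoid $E(T)$ of the limiting triangle $T$ has center with strictly positive $x$-coordinate. The centroid of $T$ is $\bigl(\tfrac13, \tfrac13\bigr)$; more to the point, for any triangle the center (in fact the center of the John ellipsoid, i.e. the minimum-volume enclosing ellipsoid) coincides with the centroid, by the affine-invariance of the minimum-volume enclosing ellipsoid together with the fact that for the regular simplex (here, an equilateral triangle) the enclosing ellipse is the circumscribed circle centered at the centroid. Applying the affine map that takes an equilateral triangle to $T$, the center of $E(T)$ is the image of the centroid, namely the centroid $\bigl(\tfrac13,\tfrac13\bigr)$ of $T$, which has $x$-coordinate $\tfrac13 > 0$. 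Then I would invoke continuity: the minimum-volume enclosing ellipsoid depends continuously (in fact its center and matrix depend continuously) on the convex body in the Hausdorff metric, and $F_1 = F_1(\alpha) \to T$ as $\alpha \to 0^+$. Hence for all sufficiently small $\alpha > 0$, the center $(c,b)$ of $E(F_1(\alpha))$ satisfies $c > 0$; and $b > 0$ is immediate since the center lies in the interior of $F_1$ (Proposition~\ref{prop:ball}'s analogue for ellipsoids, or simply because $F_1$ has nonempty interior lying in $\{y > 0\}$ away from the $x$-axis except at two boundary points). Fixing one such $\alpha$ gives the claim.

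Alternatively, to avoid quoting continuity of the John ellipsoid, I would give a direct computation: parametrize an enclosing ellipse as $\{x : (x - p)^\top M (x-p) \le 1\}$ with $M \succ 0$, impose that it contains the four vertices, and minimize $-\log\det M$ (equivalently maximize area${}^{-1}$). By the KKT/complementary-slackness characterization of the minimum-volume enclosing ellipsoid (each active vertex $x$ contributes a weight $\lambda_x \ge 0$, with $\sum \lambda_x = 1$, $\sum \lambda_x (x-p) = 0$, and $M^{-1} = \sum \lambda_x (x-p)(x-p)^\top$ in $\R^2$), one solves a small finite system; taking $\alpha \to 0$ reduces it to the triangle case, where the symmetry argument pins down $p$. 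Either route reduces the claim to the single clean fact that the enclosing ellipse of a triangle is centered at its centroid.

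The main obstacle is making the limiting/perturbation argument rigorous without an undue amount of bookkeeping: one must ensure that the center does not jump discontinuously as $\alpha \to 0$ (handled by the standard continuity of the minimum-volume enclosing ellipsoid under Hausdorff convergence of compact convex bodies with nonempty interior), and one must confirm that in the degenerate triangle the relevant ellipse is genuinely the John ellipse and not some other extremum. Given those, the positivity of $c$ follows from the strict positivity of the triangle centroid's $x$-coordinate ($\tfrac13$), and $b > 0$ follows because the John center lies in the interior of the body. I expect the whole argument to occupy a short paragraph once the centroid fact for triangles is cited or quickly rederived via affine invariance.
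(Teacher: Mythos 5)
Your proof is correct and takes essentially the same approach as the paper: both reduce to the degenerate case $\alpha=0$, where $F_1$ becomes the triangle with vertices $(-1,0),(1,0),(1,1)$, use affine invariance of the minimum-volume enclosing ellipsoid together with the equilateral-triangle/circumcircle fact to place its center at the triangle's centroid $(1/3,1/3)$, and then invoke continuity in $\alpha$ to conclude that both coordinates remain strictly positive for small $\alpha>0$. The only cosmetic difference is that you phrase the triangle fact as a general statement (Steiner circumellipse centered at the centroid) while the paper works out the explicit shearing map; your separate justification of $b>0$ is unnecessary since the continuity argument already gives both coordinates positive, exactly as in the paper.
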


We postpone the proof of Lemma~\ref{lem:lb_ellipsoid} and continue with
the description of the request sequence. The remaining nested bodies $\{F_t \mid t \ge 2\}$ are created so that the
$x$-coordinate of the center of $E(F_t)$ oscillates between $c$ and
$-c$. To this end, we construct two infinite families of halfspaces
$R_i$ and $L_i$: for $i \ge 0$, we define
\begin{align*}
R_i &:= \{(x,y)\;|\, 2y\leq (\alpha^{2i}-\alpha^{2i+1})x
	+ (\alpha^{2i}+\alpha^{2i+1})\} \\
L_i &:= \{(x,y)\;|\, 2y\leq (\alpha^{2i+2}-\alpha^{2i+1})x
	+ (\alpha^{2i+2}+\alpha^{2i+1})\}.
\end{align*}
Observe that the boundary of $R_i$ passes through the points $(1, \alpha^{2i})$ and
$(-1,\alpha^{2i+1})$, and the boundary of $L_i$ passes through
$(-1,\alpha^{2i+1})$ and $(1, \alpha^{2i+2})$. See
Figure~\ref{fig:lb_Ri} for an illustration.

\begin{figure}
\hfill\includegraphics{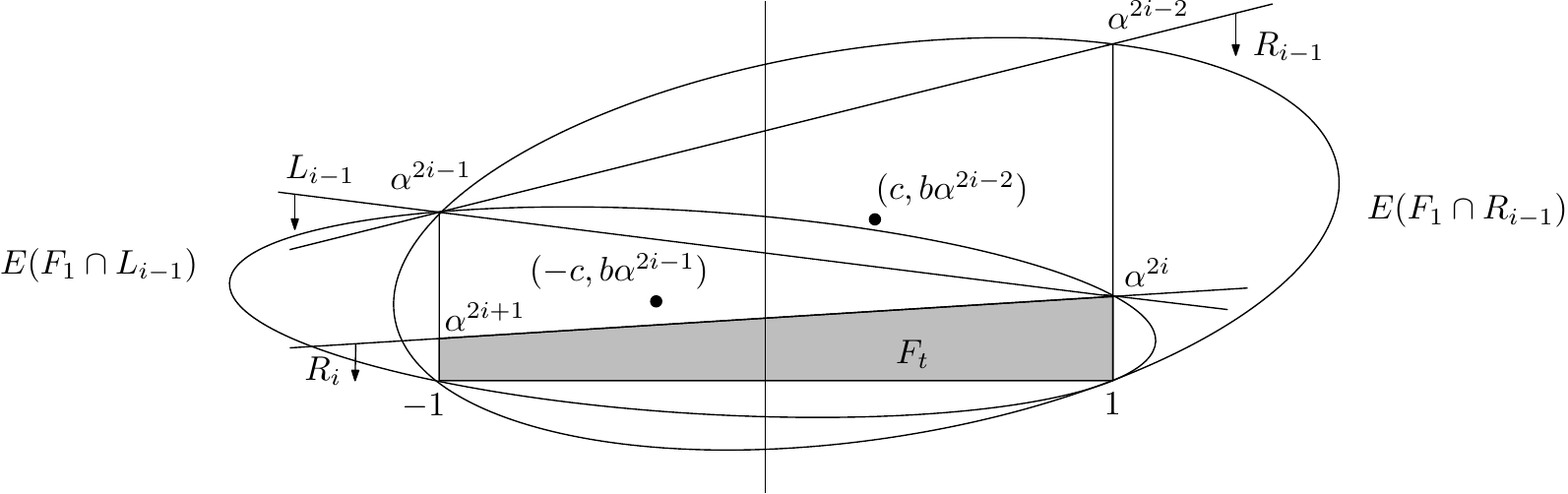}\hfill\ \caption{Halfspaces $L_i$ and $R_i$}
\label{fig:lb_Ri}
\end{figure}

We now describe the requests $F_t$ for $t \geq 2$. For even $t$, we set $H_t$ to be $L_i$, where $i$ is the
smallest index such that $L_i$ does not contain the current position
of the algorithm. For odd $t$, we select $H_t$ to be $R_i$ in a similar fashion. This completes our description of the requests $F_t$. Looking at Figure~\ref{fig:lb_Ri}, one can easily observe that our
requests $F_t$ are indeed nested.

The following lemma describes the position of the center of
each ellipsoid $E(F_t)$. Note that $c$ and $b$ are the strictly positive
constants from Lemma~\ref{lem:lb_ellipsoid}.

\begin{lemma}
When $t$ is odd, the center of $E(F_t)$ is  $(c,b\alpha^{2i})$ for some $i$.
When $t$ is even, the center of $E(F_t)$ is $(-c,b\alpha^{2i+1})$ for some $i$.
\end{lemma}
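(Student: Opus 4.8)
The plan is to prove this by induction on $t$, tracking not just the center of $E(F_t)$ but also enough geometric information about $F_t$ itself to compute the next ellipsoid. The base case $t=1$ is exactly Lemma~\ref{lem:lb_ellipsoid}, which places the center at $(c,b)$ with $c,b>0$; note that $(c,b) = (c, b\alpha^0)$ matches the claimed form for odd $t$ with $i=0$. For the inductive step, the key observation is a scaling symmetry: the halfspace $R_i$ is obtained from $H_1$ by the linear map that scales the $y$-coordinate by $\alpha^{2i}$ (compare the coefficients $(1-\alpha)x + (1+\alpha)$ versus $(\alpha^{2i}-\alpha^{2i+1})x + (\alpha^{2i}+\alpha^{2i+1}) = \alpha^{2i}((1-\alpha)x+(1+\alpha))$), and $L_i$ is the mirror image (reflect $x \mapsto -x$) of $R_{i}$ scaled appropriately. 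So the bodies $F_t$ are, up to a vertical scaling by a power of $\alpha$ and possibly a reflection across the $y$-axis, all congruent to a fixed shape, and the center of the smallest enclosing ellipsoid transforms the same way.

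Concretely, I would first establish the affine-equivariance of the minimum-volume enclosing ellipsoid: if $T$ is an invertible affine map then $E(T(S)) = T(E(S))$, so the center of $E(T(S))$ is $T$ applied to the center of $E(S)$. This is standard (the Löwner–John ellipsoid commutes with affine maps since volume ratios and containment are affinely invariant). Then I would argue that for the relevant indices, $F_t$ agrees with the image of $F_1$ under the appropriate map on a neighborhood that determines the enclosing ellipsoid — more carefully, that $F_t = \sigma_t \circ D_{\alpha^{m_t}}(F_1')$ where $D_\lambda$ scales $y$ by $\lambda$, $\sigma_t$ is identity or the $x$-reflection, and $F_1'$ is $F_1$ itself (one must check that the "extra" halfspaces $A,B,C$, i.e. $y\ge 0$, $-1\le x\le 1$, are preserved by these maps, which they are: vertical scaling fixes $y\ge 0$ and the vertical strip, and $x$-reflection fixes the strip and $y\ge0$). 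Applying equivariance, the center of $E(F_t)$ is $\sigma_t(D_{\alpha^{m_t}}(c,b)) = \sigma_t(c, b\alpha^{m_t})$, which is $(c, b\alpha^{2i})$ or $(-c, b\alpha^{2i+1})$ with the parity of $m_t$ matching the parity of $t$ by construction (even $t$ uses some $L_i$, giving exponent $2i+1$ and a reflection; odd $t$ uses some $R_i$, giving exponent $2i$ and no reflection).

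The step I expect to be the main obstacle is the bookkeeping that identifies $F_t$ exactly with the scaled/reflected copy of $F_1$ — in particular verifying that the "current position of the algorithm," which is what selects the index $i$ in the definition of $H_t$, indeed lands so that the chosen $H_t$ is the unique halfspace making $F_t$ congruent to the scaled template, and that no earlier halfspace $H_1, R_0, L_0, \ldots$ from the history cuts into $F_t$ in a way that changes its enclosing ellipsoid. This requires checking the nesting more carefully than the one-line remark in the text: one needs that once we pass to $F_t$, all previously-issued halfspaces are implied by the current one together with $A,B,C$, so that $F_t$ really is just $\{y\ge0,\,-1\le x\le1\}\cap H_t$. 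Given the explicit linear forms this is a routine (if slightly tedious) comparison of slopes and intercepts along the lines $x=\pm1$, using $0<\alpha<1$ to order the intercepts $\alpha^0 > \alpha^1 > \alpha^2 > \cdots$; I would dispatch it by noting each new boundary line passes through one of the grid points $(\pm1,\alpha^j)$ and lies below all previously relevant boundary lines inside the strip. Once that identification is in hand, the conclusion is immediate from affine-equivariance of the Löwner–John ellipsoid and Lemma~\ref{lem:lb_ellipsoid}.
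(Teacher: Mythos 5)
Your proposal is correct and takes essentially the same approach as the paper's proof: exhibit the affine map (vertical scaling by $\alpha^{2i}$ or $\alpha^{2i+1}$, composed with an $x$-reflection in the even case) that carries $F_t$ to $F_1$, invoke affine-equivariance of the minimum-volume enclosing ellipsoid, and pull back the center $(c,b)$ of $E(F_1)$. The only difference is that you are more explicit about verifying the nesting/``old halfspaces are implied'' bookkeeping and the equivariance fact, both of which the paper asserts without elaboration; your framing as ``induction on $t$'' is inessential since, as you actually argue, each $F_t$ is directly congruent to $F_1$ under a single map.
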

\begin{proof}
Recall that $F_t = A \cap B \cap C \cap H_t$. First, consider odd $t$, where $H_t = R_i$.
We define a map $f\colon (x,y) \to (x, y/\alpha^{2i})$, which rescales the $y$-coordinate. Note that it maps
$F_t$ to $F_1$.
Moreover, $f$ preserves ratios between volumes, and therefore the map of the
smallest ellipsoid containing $F_t$ is the smallest ellipsoid containing
$F_1$. We know that its center is at $(c, b)$, and therefore
the center of $E(F_t)$ is at $(c, b\alpha^{2i})$.

For $t$ even, we have $H_t = L_i$ for some $i$.
We define $g \colon (x,y) \to (-x,y/\alpha^{2i+1})$, which
first mirrors $F_t$ with respect to the $y$ axis, and then rescales the $y$-coordinate, so that $g(F_t) = F_1$.
Clearly, mirroring preserves the volumes, while rescaling preserves
their ratios. Therefore, $f$ maps $E(F_t)$ to $E(F_1)$ whose center is
at $(c, b)$ and the center of $E(F_t)$ is at $(-c, b\alpha^{2i+1})$.
\end{proof}

Let us now estimate the competitive ratio of Algorithm~\ref{alg:ellips}.
At each time step, it incurs cost at least $2c$, since it moves between two
points with $x$-coordinates $c$ and $-c$ respectively.
Therefore, if $N$ is the total number of requests, the total cost incurred by
Algorithm~\ref{alg:ellips} is at least $N\cdot 2c$, which can be arbitrary large.
On the other hand, the point $(0,0)$ is contained in every $F_t$,
since it belongs to $F_1$ and also to every halfspace $R_i$ and $L_i$.
Therefore, the cost of $\OPT$ is at most 1 and the competitive ratio of Algorithm~\ref{alg:ellips}
is unbounded.

\begin{proof}[Proof of Lemma~\ref{lem:lb_ellipsoid}]
Using a computer algebra system, we computed that for $\alpha=1/2$ the center of
$E(F_1)$ is in $(0.24568, 0.40571)$.
This can be calculated, e.g., using the function \texttt{ellipsoidhull} in R, but similar
functions are also available for Matlab. This shows that $\alpha=1/2$ satisfies
the requirements of the lemma.

The manual computation of $E(F_1)$ for $\alpha=1/2$ is laborious,
but we can still give a formal proof of the existence of a suitable $\alpha$. Let $F_1(\alpha)$ denote $F_1$ with parameter $\alpha$.
Observe that $F_1(0)$ is a triangle with vertices $(1,0), (1,1)$,
and $(-1,0)$; and $F_1(1)$ is a square with vertices $(1,0), (1,1)$,
$(-1,0)$ and $(-1,1)$. Since the center of $E(F_1(\alpha))$ evolves continuously with $\alpha$, it suffices to show
that the center of the smallest
ellipsoid containing the triangle $F_1(0)$ lies strictly to the right of the $y$-axis.
By continuity, this implies that there exists $\alpha >0$ such that
the $x$-coordinate of the center of $E(F_1(\alpha))$ is still strictly positive.

We define the affine map
$f\colon (x,y) \to
\bigl(\begin{smallmatrix}1 & -1\\ 0 &\sqrt{3}\end{smallmatrix}\bigr) (x,y)^\top$.
Let $T = F_1(\alpha)$. This transformation makes $T$ equilateral by first
shearing it to the left to be symmetric with respect to the $y$-axis and then
shrinking the $y$-coordinate.
The smallest ellipsoid containing an equilateral triangle is its circumcircle,
whose center lies in the intersection of its altitudes. Since one of the
altitudes lies on the $y$-axis, the $x$-coordinate of the center of $E(f(T))$ 
is 0, and its $y$-coordinate is strictly positive.
Since $f$ preserves ratios between volumes, we have $f(E(T)) = E(f(T))$.
Therefore, applying $f^{-1}$ to the center of $E(f(T))$, we know that the
center of $E(T)$ has both coordinates strictly positive.
\end{proof}

\paragraph{Lower bound for the centroid algorithm.} 
Similar to the 
Ellipsoid-based algorithm, one can propose an algorithm that moves to the
centroid (center of mass) instead:

\begin{algorithm}[H]
Whenever the current position becomes infeasible:

\hspace{3em}
	Move to the centroid of $F_t$.
\caption{A centroid-based algorithm}
\end{algorithm}

The same requests $F_t$ as above also shows that this algorithm is not competitive either.
In fact, the
analysis here is much easier, as we can compute the centroids using
simple geometry (the input convex bodies can be partitioned into a
right triangle and a rectangle, as seen e.g.~in
Figure~\ref{fig:lb_F4}).

A simple calculation shows that for $\alpha = 1/2$, the centroid of
$F_1$ is $(1/9, 7/9)$. For the convex bodies requested later, the
$x$-coordinate of the centroid will oscillate between $-1/9$ and
$1/9$, again showing that the total distance traveled by the
algorithm can be made arbitrarily large.
 
\section*{Acknowledgments}
We would like to thank S\'ebastien Bubeck, Niv Buchbinder, Anupam Gupta, Guru Guruganesh, Crist\'obal Guzm\'an, and Ren\'e Sitters for several interesting discussions. 

\bibliographystyle{alpha}
{\small \newcommand{\etalchar}[1]{$^{#1}$}

 }

\end{document}